\definecolor{Darkblue}{rgb}{0,0,0.4}
\definecolor{Brown}{cmyk}{0,0.81,1.,0.60}
\definecolor{Purple}{cmyk}{0.45,0.86,0,0}
\newtheorem{observation}[theorem]{\bf Observation}
\newtheorem{claim}[theorem]{\bf Claim}
\newtheorem{appendix-claim}{\bf Claim A.\!\!}
\newcommand{\OPT}{\ensuremath{{\sf opt}}}
\renewcommand{\OPT}{\ensuremath{\textsf{\textsc{opt}}}}
\newcommand{\initOneLiners}{%
    \setlength{\itemsep}{0pt}
    \setlength{\parsep }{0pt}
    \setlength{\topsep }{0pt}
%      \usecounter{myLISTctr}
}
\newcommand{\fl}{{{\sc Facility Location}}\xspace}
\newcommand{\dist}{\ensuremath{\mathsf{dist}}}
\newcommand{\distlog}{\ensuremath{\mathsf{dist\_log}}}
\newcommand{\nqueries}{\ensuremath{n}} % length of the input sequence
\renewcommand{\nqueries}{\ensuremath{\mathfrak{n}}}
\newcommand{\metric}{\ensuremath{\mathfrak{V}}} % metric space
\renewcommand{\metric}{\ensuremath{\mathcal{V}}}
\newcommand{\nmetric}{N} % size of the metric space
\renewcommand{\nmetric}{\ensuremath{\mathfrak{m}}} % size of the metric space
\newcommand{\capac}{\ensuremath{\mathfrak{c}}} % capacity
\newcommand{\nfin}{\ensuremath{\nqueries_{\mathfrak{act}}}} % number of active elements
\newcommand{\hst}{\ensuremath{\mathfrak{T}}} % HST tree
\newcommand{\height}{\ensuremath{\mathfrak{h}}} % heigth of the HST tree
\newcommand{\clients}{\ensuremath{\textsl{clients}}} % clients (used in Section 4)
\newcommand{\Artur}[1]{{{\footnote{\textcolor[rgb]{1.00,0.00,0.00}{{\sc\small \textbf{Artur:}} #1}}}}}
\renewcommand{\Artur}[1]{}
\title{Online Facility Location with Deletions
\Artur{``Thanks/support'' --- TO BE UPDATED}
}
\titlerunning{Online Facility Location with Deletions}
\author{Marek Cygan}{Institute of Informatics, University of Warsaw, Banacha 2, 02-097 Warsaw, Poland}{cygan@mimuw.edu.pl}{}{}
\author{Artur Czumaj}{Department of Computer Science and Centre for Discrete Mathematics and its Applications (DIMAP), University of Warwick, Coventry CV4 7AL, United Kingdom}{A.Czumaj@warwick.ac.uk}{}{}
\author{Marcin Mucha}{Institute of Informatics, University of Warsaw, Banacha 2, 02-097 Warsaw, Poland}{mucha@mimuw.edu.pl}{}{}
\author{Piotr Sankowski}{Institute of Informatics, University of Warsaw, Banacha 2, 02-097 Warsaw, Poland}{sank@mimuw.edu.pl}{}{}
\authorrunning{M.~Cygan, A.~Czumaj, M.~Mucha, P.~Sankowski}
\subjclass{F.2.2 Nonnumerical algorithms and problems}
\keywords{online algorithms, facility location, fully-dynamic online algorithms}
\begin{document}
%\date{April 11, 2018}

\maketitle

\begin{abstract}
In this paper we study three previously unstudied variants of the online \fl problem, considering an intrinsic scenario when the clients and facilities are not only allowed to arrive to the system, but they can also depart at any moment.

We begin with the study of a natural \emph{fully-dynamic online uncapacitated} model where clients can be both added and removed. When a client arrives, then it has to be assigned either to an existing facility or to a new facility opened at the client's location. However, when a client who has been also one of the open facilities is to be removed, then our model has to allow to reconnect all clients that have been connected to that removed facility. In this model, we present an optimal $O(\log\nfin / \log\log\nfin)$-competitive algorithm, where $\nfin$ is the number of active clients at the end of the input sequence.

Next, we turn our attention to the \emph{capacitated} \fl problem. We first note that if no deletions are allowed, then one can achieve an optimal competitive ratio of $O(\log\nqueries / \log \log\nqueries)$, where $\nqueries$ is the length of the sequence.
%ONE REVIEWER DIDN'T LIKE: number of queries.
However, when deletions are allowed, the capacitated version of the problem is significantly more challenging than the uncapacitated one. We show that still, using a more sophisticated algorithmic approach, one can obtain an online $O(\log \nmetric + \log\capac \log\nqueries)$-competitive algorithm for the capacitated \fl problem in the fully dynamic model, where $\nmetric$ is number of points in the input metric and $\capac$ is the capacity of any open facility.
\end{abstract}

%===============================================================

\section{Introduction}

The \fl is one of the central combinatorial optimization problem, extensively studied in the literature for several decades, see, e.g., \cite{ASS17,byrka,GNW:1990,JV:2001,shi-li,STA:1997} and the references therein. The goal is to connect a given set of clients to a set of facilities such that the service cost is optimized. A very natural setting for the \fl problem is the \emph{online} scenario, where clients arrive incrementally over time and need to be connected to existing or newly opened facilities. Indeed, since it has been introduced over 15 years ago by Meyerson \cite{Meyerson:2001}, the online version of the \fl problem and its generalizations received considerable amount of attention \cite{Alon:2006,Anagnostopoulos2004175,AGLS16,dai2010,diveki2011,Fotakis2006,Fotakis2007,fotakis2008,Garg:2008,eyesclosed,kling2012,NW13,NET:NET20109,FWL16,umboh2015}. Typically in these models one assumes to be given a metric over a set of candidate points. When a client appears, we are allowed to open a new facility in her location, paying some cost for opening a new facility, and then we have to irrevocably connect the client to one of the open facilities, paying for the connection the cost equal to the distance from the client to the facility. In this paper, we study a more complex scenario and consider a natural extension of this model allowing clients to be \emph{removed} from the system, extending the standard dynamic model to the \emph{fully dynamic} setting. That is, in each time step either a new client arrives and needs to be connected to one of the open facilities, or one of the clients already existing in the system departs. Observe that if a client who has been also one of the open facilities is to be removed, then our model has to allow to reconnect all clients that have been connected to that removed facility.

\Artur{I think the model is good and natural, but it seems that we have been struggling with motivating it well. A good example is the following comment: \emph{``The topic itself is certainly interesting, although the two motivating examples given in the introduction (schools and P2P networks) do not seem to match the model very well.''} Any thoughts of how to sell it better? Maybe the example of the schools is not great? Any better suggestions?}The fully dynamic model is very natural in the context of the \fl problem, where in a number of scenarios it is desirable to dynamically process the arrival objects, and then to allow their departures. For example, if one wants to build and maintain schools in a newly developing city, one wants to allow a steady arrival of new pupils to the area, and also allow changes in the school demands when pupils population is declining. Similarly, if one wants to maintain a construction of a network, where all clients are to be connected to the servers (and pay connection costs) and each client is allowed to host a server (and pay an opening cost), one may also want to allow the removal of some clients and with that also a closure of some servers. Given that the relocation costs are often very expensive, the decisions should be regarded as irrevocable, unless, as in the case of deletions, are necessary. The framework considered in this paper is even more natural in complex distributed settings, for example, in some more modern scenarios that have been recently motivated by applications in peer-to-peer systems, when client and facility are coupled to the same entity. Consider the so called p2p networks with super-peers \cite{DBLP:conf/icde/YangG03} that were used for filesharing as Gnutella \cite{gnutella} or can be used for decentralized online social networks \cite{SD12}, distributed game systems \cite{x14}, grid management systems \cite{x5} and distributed storage \cite{x9}. In such case, some of the nodes decide to host the provided content. This decision incurs some cost to this node, but can reduce the cost of serving other nodes. The main question asked about such systems is about the needed ratio of superpeers to peers that guarantees that enough capacity of superpeers is available to serve all clients \cite{DBLP:conf/icde/YangG03}. This measure corresponds exactly to competitive ratio in our models.

Before discussing our results and techniques further, let us first formally define the models starting with the classical online variant.

%===============================================================

\subsection{The model}

We study the performance of online algorithms for the \fl problem in the standard framework of \emph{competitive analysis} (cf. \cite{BE:1998,ST:1985}). A randomized algorithm for the \fl problem is \emph{$\alpha$-competitive} if for any input sequence, its expected cost is at most $\alpha$ times the optimal cost for the corresponding instance of the offline \fl. Note that the corresponding instance of the offline problem contains only the clients that are active at the end of the input sequence. %, where the input sequence is known in advance.
We consider a standard special version of the \fl problem, where the set of clients and the set of possible facility locations are identical (see, e.g., \cite{BCIS05,Fot:2011,Meyerson:2001}).

%===============================================================

\subsubsection{\emph{Online} facility location}
\label{subsec:online-fl}

We consider the \fl problem, where points (from some metric space)
%\Artur{In all results except for Theorem \ref{thm:capacitated-fl}, the metric space can be unknown to the algorithm. In Theorem \ref{thm:capacitated-fl}, we need to know the metric (since otherwise our analysis in Section \ref{sec:FL-cap-with-del} wouldn't work).}
arrive in online fashion. When a point $x$ arrives, we can first open a facility in $x$ and then we have to assign $x$ to some open facility. These choices are irrevocable. We consider only the \fl problem with \emph{uniform opening costs}, which are all, without loss of generality, equal to $1$. The restriction to the study of uniform costs makes the problem interesting, since without this assumption, no bounded competitive ratio can be obtained (cf. Claim A.\ref{claim:no-go-arbitrary-costs} in Appendix \ref{app:model}).

The cost of an obtained solution for a point set $\mathcal{X}$ is:
\begin{equation*}
    |\mathcal{F}| + \sum_{x \in \mathcal{X}} dist(x,\mathfrak{a}(x))
    \enspace,
\end{equation*}
where $\mathcal{F}$ is the set of open facilities and $\mathfrak{a}(x)$ is the open facility to which $x$ is assigned.
The cost of a point $x \in S$ is equal to its \emph{connecting cost} (distance to the open facility assigned to $x$) plus its \emph{opening cost} (if $x$ is an open facility then the opening cost is 1, and it is 0 otherwise).

The model defined here has been studied in the past, see e.g., Fotakis \cite{Fot:2011} for a survey or Meyerson \cite{Meyerson:2001}. (One frequently assumes (see e.g., \cite{Meyerson:2001}) that the opening cost in any facility is equal to some $f$, but simple scaling makes this problem equivalent to the one studied in our paper, that is, with $f=1$.)

%===============================================================

\subsubsection{Online facility location with \emph{deletions}}

In this paper, we consider a \emph{fully-dynamic online} setting, in which in each time step either a new point (from some metric space) arrives in an online fashion, or a point already in the input is deleted. To cope with the case when an open facility is deleted, we have to allow the input points to be reassigned, and possibly to open other facilities. We consider the following, very natural model:%Our model is as follows:
\begin{itemize}
\item In the online process, the requests are arriving online, and each request is either an arrival of a new demand point from a metric space, or a request to remove a previously inserted demand point.
\item When a new demand point $x$ arrives, the algorithm must at once and irrevocably decide if a new facility will open at $x$, and then must assign $x$ to some open facility (possibly to itself, if a facility was open at $x$).
\item When a point $x$ is requested to be removed, then $x$ is deleted from the system and if $x$ was an open facility, then all points assigned to $x$ will be immediately reassigned to other facilities and some of these points may become open facilities.
\end{itemize}

%===============================================================

\subsubsection{Online \emph{capacitated} facility location}

We also consider a more general model of online \emph{capacitated \fl} with deletions, in which each open facility can handle at most $\capac$ clients, that is, at any moment at most $\capac$ clients can be connected to any single facility. Again, we study only a \emph{uniform} case (in which each facility has the same capacity $\capac$), since otherwise no bounded competitive ratio can be obtained (see Claim A.\ref{claim:no-go-arbitrary-capacities} in Appendix \ref{app:model}).

%===============================================================

\bigskip

The goal of an algorithm for the \fl in any of the models defined above is to minimize the cost of the solution and to obtain an algorithm that is $\alpha$-competitive for $\alpha$ as small as possible. The performance of the algorithm may be a function of the \emph{length of the input sequence $\nqueries$}, the \emph{number of points $\nfin$ active} in a given moment, the \emph{size of the input metric space \nmetric}, and the \emph{capacity $\capac$}.

We note that in the model of uncapacitated \fl, the known results (as well as our new results) work for any metric space, even if it is \emph{unknown to the algorithm}. However, for the new algorithm for online capacitated \fl with deletions (cf. Theorem \ref{thm:capacitated-fl}), we will assume that \emph{the underlying metric is known to the algorithm in advance}.

%===============================================================

%===============================================================

\subsection{Related work}

As one of the fundamental problems in operations research and combinatorial optimization, the \fl problem has been studied extensively in the past, see, e.g., the standard exposition in \cite{GNW:1990} and more recent advances in \cite{ANS17,ASS17,JV:2001,STA:1997}\Artur{Have there been any new/recent papers we want to cite?}, and the references therein. In the online setting, an early research focused on the $k$-median problem (see, e.g., \cite{MP:2003}), which is a variant of the \fl problem where exactly $k$ facilities have to be opened.
%the \fl problem with a fixed number of facilities,
% CITATION FROM MEYERSON -
%Mettu and Plaxton \cite{MP:2003} consider the problem in which the locations of the demands are all known ahead of time but the number of facilities to be placed increases incrementally in an online fashion. This contrasts with our problem, in which the demands themselves arrive online and we are allowed to open as many facilities as we like provided we pay a given cost for each facility.
%
Soon after, Meyerson \cite{Meyerson:2001} designed a simple randomized online algorithm in the uncapacitated model without deletion. His online algorithm is $O(\log \nqueries/\log\log \nqueries)$-competitive, where $\nqueries$ is the number of points in the input. (In fact, only a competitive ratio of $O(\log \nqueries)$ was proven in \cite{Meyerson:2001}, but Fotakis \cite{fotakis2008} extended the analysis from \cite{Meyerson:2001} to obtain a competitive ratio of $O(\log \nqueries/\log\log \nqueries)$.) Fotakis \cite{fotakis2008} later has shown that this bound is asymptotically tight and no online algorithm is $o(\log \nqueries/\log\log \nqueries)$-competitive; the lower bound holds for randomized algorithms against the oblivious adversary, for uniform facility costs, and for very simple metric spaces, such as the real line.
For more discussion about the history of the online version of the uncapacitated \fl problem (including deterministic online algorithms and incremental online algorithms), we refer to a survey by Fotakis~\cite{Fot:2011}.

%Similar models with deletions
%(aka turnstile models) have been studied in the case of streaming algorithms~\cite{Indyk:2004,Lammersen2008,czumaj2013}.

The extension of the online model to deal with deletion of the facilities makes the \fl problem significantly more challenging. While this model is very natural, it requires a different approach that must permit to \emph{reverse some} of the decisions in the online algorithm,
%Indeed, if we have a facility open at a point that we are going to delete, we will need to allow the clients assigned to that facility to reassign to other facilities. Similarly, if we are deleting facilities, we have to allow some points that have been inserted to the system early to allow to open facilities (for otherwise, for example, we would need to deal with the system with no facilities).
and we are not aware of any study of algorithms for the online \fl problem in the fully dynamic setting, where deletions are allowed.

%===============================================================

%\Artur{One reviewer didn't like relation to \emph{streaming}. Maybe he was right, though I don't think so. \emph{(``I found the related work on semi-streaming unnecessary -- the connection with this paper is tenuous at best.'')}}
We note however, that similar fully dynamic models for optimization problems on graphs have been considered in the past, though only in limited settings.
For example, a number of graph optimization problems have been considered in fully dynamic models in the setting of \emph{data streams}, in the so-called \emph{turnstile model}. This model has been investigated in two scenarios: in the context of geometric graph optimization problems (see, e.g., \cite{czumaj2013,Indyk:2004,Lammersen2008}), and only very recently, in the context of standard graph optimization problems (see, e.g., a recent survey \cite{McG:2014} and the references therein). The main focus of these studies is to design algorithms that process a stream of data (in this case, edge or vertex insertions and deletions) and \emph{using very limited space}, to maintain some basic graph features.
For \emph{geometric graph optimization problems}, where the input is defined over a set of points in the discrete $d$-dimensional space $\{1, 2, \dots, \Delta\}^d$, it has been shown that many basic properties (e.g., $k$-median, minimum spanning tree, minimum weight matching, MaxCut) can be approximated very efficiently even with poly-logarithmic space (see, e.g., \cite{Indyk:2004,FS:2005}). The uncapacitated \fl problem has been studied in the context of data streaming, initiated with work of Indyk \cite{Indyk:2004}, who gave a $\text{poly}(\log \Delta)$-space streaming algorithm that approximates the optimal cost of the \fl problem within a factor of $O(\log^2 \Delta)$. The best currently known streaming algorithm using $\text{poly}(\log \Delta)$-space gives an $(1+\varepsilon)$-approximation for this problem \cite{czumaj2013}.
The research in data streaming for \emph{standard graph optimization problems} has been traditionally focusing on the insertion-only model, where one was aiming to design streaming algorithms with $O(n \, \text{poly} \log n)$ space (cf. \cite{McG:2014} and the references therein). Only very recently, Ahn et al.\ \cite{AGM:2012} initiated the study of algorithms that allow both insertions and deletions (see also \cite{CM:2005}). This line of research led to a number of efficient data streaming algorithms for fundamental graph problems, such as testing connectivity or bipartiteness, computing spanning trees and various graph sparsifiers, maximum matching, that can be (approximately) computed in small, $O(n \, \text{poly} \log n)$ space not only in the insertion only model, but also in the model with deletions \cite{AGM:2012,AGM:2013,KLMMS17,KW:2014,McG:2014}.

We also note that recently there has been some research on the standard online Steiner tree problem with deletions, see e.g., \cite{guptaK14,LOPSZ:2015}.\Artur{You (especially \textbf{\textcolor[rgb]{0.54,0.17,0.89}{Piotr}}) may prefer to elaborate more here.}

%===============================================================

\subsection{New results}

The main contribution of this paper is the first thorough study of the online \fl problem with deletions and design of new algorithms for this model in several natural settings.

%The main contribution of this paper is an introduction of a sound \emph{fully-dynamic} setting for the online \fl problem with deletion together with design of new algorithms for this model.

We begin with the study of the simplest, \emph{uncapacitated} model. We present in Theorem \ref{thm:unpacatitated-fl} an online $O(\log\nfin / \log\log\nfin)$-competitive algorithm for the uncapacitated \fl problem with deletions, where $\nfin$ is the number of active clients at the end of the input sequence; this bound gives an asymptotically optimal competitive ratio. Our algorithm is an extension of the classical insertion-only algorithm for the \fl problem due to Meyerson \cite{Meyerson:2001}, and we show that one can modify the analysis from \cite{Meyerson:2001} to allow deletions.

Next, we turn our attention to the \emph{capacitated} \fl problem. We first prove (Observation \ref{obs:Meyerson-capacitated}) that if \emph{no deletions} are permitted, then a simple modification of Meyerson's algorithm for online \fl can be applied to achieve an optimal competitive ratio of $O(\log\nqueries / \log \log\nqueries)$. However, when deletions are allowed, then the capacitated version of the problem is significantly more challenging than the uncapacitated one. We show that still, using more involved approach incorporating hierarchically well-separated trees, one can obtain an online $O(\log \nmetric + \log\capac \log\nqueries)$-competitive algorithm for the capacitated \fl problem with deletions, in the fully dynamic model, where $\nqueries$ is the number of queries, $\nmetric$ is the number of points in the input metric, and $\capac$ is the capacity of the facilities (Theorem \ref{thm:capacitated-fl}).

We notice that while the algorithms from Theorem \ref{thm:unpacatitated-fl} and Observation \ref{obs:Meyerson-capacitated} do not need to know the input metric, the result from Theorem \ref{thm:capacitated-fl} assumes that the input metric is known to the algorithm.

Our work demonstrates that despite the fact that the online \fl problem with deletion is clearly more complex than the classical online problem with insertions only, the most natural variants of these problems, for both the uncapacitated and the capacitated model for uniform facility costs, have very efficient online algorithms that achieve competitive ratios matching or almost matching those of the insertion only variants of the problem.

%===============================================================

\section{Online uncapacitated facility location}

We begin with the study of the simplest, \emph{uncapacitated} model, and describe an insertion-only online algorithm for uncapacitated \fl due to Meyerson \cite{Meyerson:2001}.

%===============================================================

\paragraph*{Algorithm M:}

When a new demand point $x$ arrives then find its nearest open facility $y$ and set $d_x = \min\{\dist(x,y), 1\}$.
Next, with probability $d_x$ open a new facility at point $x$ and assign $x$ to it; otherwise, assign $x$ to $y$.

\medskip

%===============================================================

Meyerson \cite{Meyerson:2001} proved that Algorithm M is $O(\log \nqueries / \log\log\nqueries)$-competitive in the model with insertions only (see also \cite{fotakis2008,Fot:2011}).

The most natural approach to obtain an online algorithm for the \fl problem with deletions is to attempt to modify Algorithm M. Indeed, there is a simple modification of Algorithm M that addresses the deletions. The following Algorithm M* proceeds as in Algorithm M, except that when a facility is removed, it reprocesses all the points that are now not assigned to any facility using the original algorithm.

%===============================================================

\paragraph*{Algorithm M*:}

When a new demand point $x$ arrives then find its nearest open facility $y$ and set $d_x = \min\{\dist(x,y), 1\}$.
Next, with probability $d_x$ open a new facility at point $x$ and assign $x$ to it; otherwise, assign $x$ to $y$.

When a point $x$ that is an open facility is to be deleted, then reassign all points assigned to $x$ using the algorithm for the insertions.

%When a new demand point $x$ arrives or is to be reassigned, then we find its nearest open facility $y$ and set $d_x = \dist(x,y)$.
%
%Next, with probability $d_x$ we open facility at point $x$; otherwise, we assign $x$ to $y$.

\medskip

%===============================================================

While it is appealing to hope that Algorithm M* has performance similar to that of Algorithm M for insertions only,
%the following example shows that in fact this algorithm is not competitive with any reasonable ratio.
but in fact asymptotically, it does no better than opening facilities at all points.

\begin{claim}
\label{example:1}
Algorithm M* has competitive ratio of $\Omega(\nfin)$.
\end{claim}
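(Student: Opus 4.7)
The plan is to exhibit a concrete adversarial input on a two-point metric that forces M* to open $\Omega(\nfin)$ facilities, while the offline optimum on the final active set uses only $O(1)$. The guiding intuition is to exploit a very specific weakness of M*: whenever an open facility is deleted and the only surviving facility is at distance exactly $1$, the first orphaned point that M* reprocesses opens a new facility with probability $1$. I would like to trigger this event once per step while keeping the active set of size roughly $\nfin$.

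Here is the construction I would use. Work on the metric $\{0,1\}\subseteq \R$ with distance $1$ between the two points. I would first insert an ``anchor'' point $q$ at position $1$ (opened deterministically as a facility, since its nearest-facility distance is treated as at least $1$), then $n$ points $p_1,\ldots,p_n$ at position $0$; by Meyerson's rule $p_1$ opens with probability $1$ (distance $1$ to $q$) and $p_2,\ldots,p_n$ all attach to $p_1$ at distance $0$. Then I would repeat the following $n-1$ times: (a)~insert a fresh replacement point $r_i$ at position $0$, which attaches to the current cluster facility at distance $0$ and opens nothing; then (b)~delete the currently open facility at position $0$, which forces M* to reprocess every cluster point that was attached to it. In step (b) the first reprocessed point sees $q$ at distance $1$ and therefore opens a new facility with probability $1$; every subsequent reprocessed point then sees that brand-new facility at distance $0$ and opens nothing. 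Each iteration therefore opens exactly one new facility, and because it consists of one insertion plus one deletion, the size of the active set is preserved.

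To finish, I would just tally the two costs. Over the whole sequence M* opens $2$ facilities during setup and $1$ per iteration, giving $n+1$ openings and final connection cost $0$. The surviving active set has $\nfin = n+1$ points, all confined to $\{0,1\}$, so the offline optimum opens one facility at each occupied location and pays exactly $2$. Dividing yields a competitive ratio of at least $(n+1)/2 = \Omega(\nfin)$.

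I do not foresee a real obstacle: every probability that Meyerson's rule assigns on this instance is exactly $0$ or $1$, so no expected-value calculation is needed. The only point worth checking is that the argument is insensitive to the (unspecified) order in which M* reprocesses orphaned points after a deletion: under any sequential order the first reprocessed point opens one facility and the others open none, while under a more parallel interpretation strictly more facilities would be opened, so the lower bound is robust either way.
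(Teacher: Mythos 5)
There is a genuine gap, and it comes from the cost model rather than from the mechanics of M*. In this paper the cost of the online algorithm, like the cost of \OPT, is evaluated on the \emph{final} configuration: it is $|\mathcal{F}| + \sum_{x} \dist(x,\mathfrak{a}(x))$ over the clients active at the end, and clients (hence facilities) that have been deleted contribute nothing --- the proof of Theorem \ref{thm:unpacatitated-fl} says explicitly that removed clients ``in the end generate no cost.'' In your construction, every facility opened at position $0$ during an iteration is deleted in the next iteration; at the end of the sequence M* has exactly two open facilities ($q$ and the single surviving facility at $0$) and zero connection cost, so its cost is $2 = \OPT$ and the ratio is $O(1)$, not $\Omega(\nfin)$. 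Your tally of ``$n+1$ openings'' is a cumulative count of openings over time, which is not the quantity being compared. Moreover, the obstruction is not repairable within a two-point metric: once any facility is open at position $0$, every other point at $0$ sees distance $0$ and never opens, so the final configuration can never contain more than two facilities.

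This is precisely why the paper's construction looks the way it does: it uses a star with $k$ \emph{distinct} leaves at distance $\varepsilon = 1/k$ from the center, and the whole point of the argument is to show that $\Omega(k)$ facilities \emph{survive} to the end, namely at the leaf clients $b_1,\dots,b_k$, which are never deleted. Each deletion of a center facility gives every $b_j$ that has not yet opened a fresh coin flip with success probability $\varepsilon$; a Chernoff/coupon-collector style argument shows that $\Omega(k)$ such re-temptations occur, so each $b_j$ ends up open with constant probability, yielding expected final cost $\Omega(k)$ against $\OPT = 2$. Any correct proof of Claim \ref{example:1} needs some mechanism of this kind that converts the churning into facilities that persist in the final active set; your construction converts it only into transient facilities that the adversary then deletes for free.
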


%\Artur{Term ``not any reasonable ratio'' is not appropriate, I think we want to say the competitive ratio is $\Omega(\nfin)$, where $\nfin$ is the number of active clients at the end of the input sequence.} THE CLAIM STILL HOLDS, but now, since we're using it rather informally, I hope it's OK to leave it as it is.

\begin{proof}
%\begin{example}
%\label{example:1} %\rm %%% hi guys, italic font looked ugly (IMHO)
%\small %%% ARTUR - I was not convinced one should use small font here; if you disagree then please feel free to reverse this change though
For any $k \in \mathbb{N}_+$, consider a star metric with center $\mathfrak{o}$ connected to points $\mathcal{X} = \{x_1, \dots, x_k\}$ at distance $\varepsilon=\frac{1}{k}$ from $\mathfrak{o}$. Consider the following input sequence:
\begin{enumerate}[1.]
\item Add $k^2$ clients $a_1, \dots, a_{k^2}$ located at $\mathfrak{o}$.
\item For each $i=1, \dots, k$ add a client $b_i$ located at $x_i$.
\item Remove the clients $a_1, \dots, a_{k^2-1}$ in that order.
\end{enumerate}

We will analyze the performance of algorithm $M^*$ in the above scenario. While doing that, we assume that whenever clients are reassigned, they are reassigned in the order in which they were originally assigned.

Consider the first two stages, when all the clients are added. In step 1, client $a_1$ opens a facility and then all other clients at $\mathfrak{o}$ connect to $a_1$. In step 2, each $b_i$ opens a facility with probability $\varepsilon$ and connects to $a_1$ otherwise. For the remainder of the analysis, let us assume that at least one of the clients $b_j$ opens a facility. This happens with probability $p_1 = 1 - (1 - \varepsilon)^k = \Omega(1)$.

Next, we remove $a_1$ and reassign all other clients assigned to $a_1$. One by one, each client at $\mathfrak{o}$ flips a coin and with probability $\varepsilon$ connects to one of the $b_i$'s; otherwise it opens a facility. As soon as some $a_i$ opens a facility, all other clients at $\mathfrak{o}$ connect to $a_i$, and each client at $\mathcal{X}$ that has not yet opened a facility (and hence was initially connected to $a_1$) does so with probability $\varepsilon$; otherwise it connects to $a_i$. If no $a_i$ opens a facility, then each client at $\mathcal{X}$ that has not yet opened a facility, opens one with probability $2 \varepsilon$, and otherwise it connects to one already open facility at some $b_j$.

We then remove all other clients $a_2, \dots, a_{k^2-1}$. Each time the client removed has a facility opened, the scenario described in the previous paragraph repeats. Removal of a client with no facility opened has, of course, no effect on the other clients.

Note that each time we remove a client at $\mathfrak{o}$ that has a facility open, each of the $b_j$'s opens a facility with probability at least $\varepsilon$, unless it has already opened a facility. How many times does that happen? Let us count the number of $a_i$'s with $2 \le i \le k^2-1$ that at some moment open a facility (and then get removed). Each of the $a_i$'s flips a coin exactly once (which corresponds to the situation that at some moment we have already removed $a_1, \dots, a_j$ for some $j < i$, and there is no facility open at $a_{j+1}, \dots, a_{i-1}$, with points $a_{j+1}, \dots, a_{i-1}$ connected to facilities from $\mathcal{X}$). Furthermore, the coin flip for $a_i$ is independent from coin flips made by $a_2, \dots, a_{i-1}, a_{i+1}, \dots, a_{k^2-1}$. Therefore the number of $a_i$'s that at some moment open a facility and then get removed has the binomial distribution with parameters $k^2-2$ and $ \varepsilon$. Hence, by Chernoff bound, with probability $p_2 \ge 1 - e^{-\varepsilon (k^2-2)/6} = \Omega(1)$, there are at least $\varepsilon (k^2-2)/2 \ge k/2-1$ of those clients. Therefore,
%since each time we remove a client at $\mathfrak{o}$ that has a facility open, each of the $b_j$'s opens a facility with probability at least $\varepsilon$, unless it has already opened a facility,
we can conclude that each $b_j$ opens a facility with probability at least $p_3 = 1-(1-\varepsilon)^{k/2-1} = \Omega(1)$.

The optimal offline solution opens a facility at $a_{k^2}$ and connects all points $b_1, \dots, b_k$ to $a_{k^2}$, and it has cost $1 + \varepsilon k = 2$. On the other hand, with probability $p_1 p_2 =\Omega(1)$ algorithm $M^*$ opens at least $kp_3 = \Omega(k)$ facilities from $b_1, \dots, b_k$ in expectation. This gives an $\Omega(k)=\Omega(\nfin)$ competitive ratio. So asymptotically $M^*$ does no better than just opening facilities at all points.
%
%\Artur{I don't see it. Why $\sqrt{\nmetric}$ and not $\nmetric$? The metric space has size $k+1$, or do I see it wrong (and $\nmetric$ corresponds to the number of points considered, which is indeed $k(k+1)$)? But if I'm right, then one could add one more sentence: ``So asymptotically $M^*$ does no better than just opening facilities in all points.''}
%Note that in this example $k=\Theta(\mathfrak{m})$, so asymptotically $M^*$ does no better than just opening facilities in all points.
%This sentence is not true anymore, so the example is significantly weaker. I changed it into a weaker one.
%We can make this ratio worse by taking k^{1+\delta} clients in the center, instead of k^2. This leads to slightly messier calculations so I decided
%against it. Feel free to modify it.
%\end{example}
\end{proof}

%===============================================================

\subsection{Asymptotically optimal competitive ratio for uncapacitated facility location with deletions}
\label{subsec:Algorithm1}

Claim \ref{example:1} shows one of the main challenges that online algorithms with deletions must cope with: we cannot let points attempt to open a facility too frequently, since then we would open too many facilities, and at the same time we have to be able to open some facilities in the neighborhood of the facilities that we are closing. To address this challenge we have to provide a delicate online strategy that will maintain a right balance between these two desirables.%\Artur{Reviewer: \emph{``The cases when $p_x := d'$ and $p_x := 0$ seem to be the wrong way around (both in the new point and deleted point case).''} I think he's right but I'm leaving the trace that I've made that change.}

\paragraph*{Algorithm 1:}
\textbf{Newly arriving points:} When a new demand point $x$ arrives then find its nearest open facility $y$ and set $d' = \min\{\dist(x,y), 1\}$.
With probability $d'$ open a new facility at point $x$ and assign $x$ to it; %, and set\Artur{Why do we need to do anything with $p_x$ now? Since we opened the facility at $x$, we will never have to reassign $x$ anymore.} $p_x := 0$;
otherwise, assign $x$ to $y$ and set $p_x := d'$. (The algorithm will memorize $p_x$ for future use.)

\noindent\textbf{Deletion:} When a point that is \emph{not an open facility} is to be deleted, then just remove that point.

When a point that is an \emph{open facility} is to be deleted, then reassign all points assigned to it:
When a point $x$ is to be reassigned, then find its nearest open facility $y$ and set $d' = \min\{\dist(x,y), 1\}$. Let $p_x$ be the last value used in the processing of $x$. If $d' \le 2p_x$, then assign $x$ to $y$; otherwise with probability $d'$ open a new facility at $x$; else, with probability $1-d'$ assign $x$ to $y$ and set $p_x := d'$.

\medskip

The following theorem analyzes the performance of Algorithm 1.

\begin{theorem}
\label{thm:unpacatitated-fl}
Algorithm 1 is $O(\log\nfin / \log\log\nfin)$-competitive, where $\nfin$ is the number of active clients at the end of the input sequence (in particular, as $\nqueries$ is the input length, we have $\nfin \le \nqueries$).
\end{theorem}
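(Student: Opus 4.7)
My plan is to extend Meyerson's cluster-based analysis to the fully-dynamic setting. I fix an offline optimum on the set of clients active at the end of the input sequence, and partition those clients into clusters $S_1,\dots,S_m$, where each $S_j$ is served by an optimal facility $f^*_j$ and has total connection cost $R_j = \sum_{x\in S_j}\dist(x,f^*_j)$. The goal is to show that the expected total cost that Algorithm~1 pays on behalf of the clients of $S_j$ is $O(\log\nfin / \log\log\nfin)\cdot(1 + R_j)$; summing over $j$ then yields the theorem.

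The key structural property I would exploit is a per-client invariant on the memorized value $p_x$. A Bernoulli trial at $x$ is performed only when either $x$ is first inserted, or when a reassignment produces a new distance $d' > 2p_x$, in which case $p_x$ is updated to $d'$. Hence, across the lifetime of $x$, the trial-distances $d^{(0)}_x, d^{(1)}_x, \dots, d^{(K)}_x \le 1$ satisfy $d^{(i+1)}_x > 2 d^{(i)}_x$, so they form a geometric progression: $K = O(\log(1/d^{(0)}_x))$ and their sum is at most $2 d^{(K)}_x = O(1)$. This is precisely the structural feature that rules out the cascading failure of Algorithm~M$^*$ displayed in Claim~\ref{example:1}: a \emph{quiet} reassignment with $d' \le 2p_x$ neither opens a facility nor updates $p_x$, so the same expensive randomized decision cannot be re-triggered by subsequent deletions.

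Using this, I would decompose the cost attributed to cluster $S_j$ into three parts: (a) the initial-insertion cost, when a client of $S_j$ first arrives; (b) the trial-reassignment cost, when $x$ is reassigned with $d' > 2p_x$; (c) the quiet-reassignment cost, when $x$ is reassigned with $d' \le 2p_x$. For (a) I apply Meyerson's analysis directly to the subsequence of insertions restricted to $S_j$, obtaining the target bound $O(\log\nfin / \log\log\nfin)\cdot(1+R_j)$. For (b) I would argue that inside $S_j$ the trial-reassignments behave like another Meyerson-style process: whenever a deletion orphans part of $S_j$, the orphaned clients are reassigned using the same probability-proportional-to-nearest-distance rule, and the telescoping invariant ensures that each individual client contributes only a geometric total across all its trials. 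For (c) I use that each quiet cost is at most $2p_x$ and that $p_x$ was already paid for by the most recent trial of $x$, so quiet reconnections can be charged against that trial up to a constant factor.

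The main obstacle I anticipate is controlling episodes in which a single deletion simultaneously orphans many clients of $S_j$: several of them may then be reassigned to faraway points at once, jointly inflating their $p_x$-values and triggering a burst of trial-openings in the cluster. I would handle this by showing that, conditioned on the sequence of deletions that affect $S_j$, each such orphan-episode is itself a Meyerson-style random process on a sub-multiset of $S_j$, so the expected number of facilities it opens is $O(\log|S_j|/\log\log|S_j|)$; meanwhile the $p_x$-doubling invariant guarantees that any particular client is charged nontrivially in only $O(\log)$-many such episodes over its lifetime. Combining these three contributions with the cluster-wise decomposition should yield the claimed $O(\log\nfin / \log\log\nfin)$ competitive ratio.
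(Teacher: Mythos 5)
You have correctly isolated the doubling invariant on $p_x$ (a new trial only when $d'>2p_x$) as the feature that defeats the cascading behaviour of Algorithm M$^*$, and your treatment of quiet reassignments and of post-opening costs via geometric telescoping matches the paper's argument. The gap is in how you bound the cost of the coin flips made \emph{before} any facility opens inside a cluster --- or rather inside each distance ring of a cluster: the paper retains Meyerson's ring decomposition $S_0,\dots,S_h$ with $h=O(\log|S|/\log\log|S|)$, which you also need, since comparing a central client against a facility opened at the periphery of the whole cluster loses far more than a $\sqrt{\log}$ factor. Your plan is to treat each orphan-episode as a fresh Meyerson process costing $O(\log|S_j|/\log\log|S_j|)$ and to control the damage by noting that each client performs only $O(\log(1/d^{(0)}_x))$ trials. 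This does not close: the per-episode overheads simply add up, the number of episodes containing at least one trial is bounded only by the total number of trials, and $\log(1/d^{(0)}_x)$ is unbounded in a general metric (initial distances can be arbitrarily small). You would end up with a bound that grows with the number of deletions rather than with $\log\nfin/\log\log\nfin$.

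The paper's key new ingredient, absent from your proposal, is Lemma~\ref{lem:martingale}: take the chronological sequence of \emph{all} coin flips of clients of a ring $S_i$ --- across the initial insertions and all reassignment episodes --- stop at the first success, and apply Doob's optional stopping theorem to the martingale $Z_j=\sum_{k\le j}(P_k-X_k)$ to conclude that the expected sum of flip probabilities incurred before the first facility opens in $S_i$ is at most $1$, no matter how many deletions occur or how the adaptive probabilities depend on earlier outcomes. This single global bound replaces your per-episode accounting. It is complemented by an observation you should make explicit: $S_i$ contains only clients active at the end of the sequence, so the first facility opened in $S_i$ is never deleted and the ``first epoch'' of $S_i$ ends once and for all; from then on every client of $S_i$ has its last trial probability bounded by $O(d^*(\log|S|)^{i/2})$, and the telescoping argument you describe correctly takes over.
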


\begin{proof}
We follow the approach of Meyerson \cite{Meyerson:2001}, with special care taken to deal with deletions.

Consider any optimal solution. For a fixed facility $v$, let $S$ be the set of clients connected to $v$ in the optimal solution. Let $d^*$ be the average distance from the points of $S$ to $v$. We split $S$ into $h+1$ subsets $S_0, S_1, \dots, S_{h}$, where $h = \lceil 2 \log |S| / \log \log |S| \rceil$. Points in $S_0$ are at distance at most $d^*$ from $v$. For $1 \le i \le h$, points in $S_i$ are at distance greater than $d^* (\log |S|)^{(i-1)/2}$, but at most $d^* (\log |S|)^{i/2}$ from $v$. Note that each point is contained in some $S_i$, as $d^* (\log |S|)^{h/2} \ge d^* |S|$, and a single client at distance more than $d^* |S|$ would contradict the average distance $d^*$.

For each set $S_i$ we split the time into two epochs: the second epoch starts when the first point in $S_i$ becomes an open facility in the solution of the algorithm --- note that the second epoch might never start.

Consider the first epoch of some $S_i$. The part of the cost of the final solution incurred by the points in $S_i$ during the first epoch
is the total connection cost of these points at the end of the first epoch plus possibly a single opening cost. The connection cost of a point $x \in S_i$ is upper bounded by twice the probability of the last coin flip of $x$, regardless of whether that connection was preceded by a coin flip or not. To bound the total connection cost of $S_i$ in the first epoch it is therefore enough to
bound the sum of the probabilities of all the coin flips related to $S_i$ made during that epoch.

\begin{lemma}
\label{lem:martingale}
The expected value of the sum of the probabilities of the coin flips related to points in $S_i$ and made before the first facility in $S_i$ opens is at most $1$.
\end{lemma}

\begin{proof}
Let $P_1,P_2,\ldots,P_M$ be the probabilities of all coin flips (with non-zero probabilities) made for points in $S_i$ and let $X_1,X_2,\ldots,X_M$ be the outcomes of these coin flips, so that $P[X_j=1] = P_j$ (note that each $P_j$ is a random variable, and not a constant since its value might possibly depend on earlier coin flips). Also, add to both sequences a virtual ,,sentinel'' coin flip with $P_{M+1}=1$, $X_{M+1}=1$. Define $Z_0 = 0$ and
\[ Z_{j+1} = Z_j + P_{j+1} - X_{j+1}\]
for $j=1,\dots,M+1$.
Then the sequence $\{Z_j\}$ forms a martingale. Let $T=\min \{ j>0 : X_j=1\}$ be the position of the first heads in $\{X_j\}$. Note, that $T$ is well defined, since $X_{M+1}=1$. Also note, that $T = \min \{ j>0 : Z_j \le Z_{j-1} \}$, i.e.,~$T$ is a stopping time for $\{Z_j\}$. Since $T$ is bounded, from the Doob's optional stopping time theorem
    %Martingale Stopping Theorem~[MISSING CITATION]
we get that $E[Z_T] = E[Z_0] = 0$. However, we also have
\[ Z_T = \sum_{j=1}^T P_j - 1,\]
and thus the claim follows. Note that the claim does not hold with equality, since the above expression might include the virtual coin flip added to make $T$ well defined.
\end{proof}

It follows that the total cost incurred by the points of $S_i$ during the first epoch is a constant, so the overall
cost of the first epoch is $O(h)$.

Consider now the cost incurred by any point $x \in S_i$ in the second epoch. If $x$ does not make any coin flips
in the second epoch, then any connection made by $x$ has cost upper bounded by $2d^*(\log |S|)^{i/2}$, since there
is an open facility in $S_i$ at this point. Consider now the case when $x$ does make a coin flip during the second epoch.
We then upper bound the cost of the last connection made by $x$ by twice the probability of the last coin flip of $x$.
We also need to bound the cost of a facility that $x$ might potentially open. The expected number of facilities opened
by $x$ is the sum of the probabilities of all its coin flips. Each term in this sum is at least twice the previous one,
and so the expected number of facilities opened by $x$ is at most twice the probability of the last coin flip of $x$.
This probability is at most $2d^*(\log |S|)^{i/2}$ since there is an open facility in $S_i$.

To sum up, the total cost incurred by $x \in S_i$ during the second epoch is at most $8d^*(\log |S|)^{i/2}$.
Consider first the case where $S_i$ is not the innermost layer, i.e., $0<i\le h$.
Any $x\in S_i$ is then connected to $v$ in the optimum solution,
and thus incurs a cost of at least $d^*(\log |S|)^{(i-1)/2}$.
Therefore, the cost incurred by such $x$ in the second epoch is at most $\sqrt{\log |S|}$ times the
corresponding cost in the optimal solution.
Consider now the innermost layer. For any client in $S_0$ Algorithm 1 pays at most $O(d^*)$,
leading to $O(d^* |S_0|)$ total cost of the second epoch, which by the definition of $d^*$
is at most a constant factor more than the connection cost payed by the optimal algorithm.

Note that in the analysis we completely ignore the cost generated by the clients,
that are removed during the course of the algorithm, as those clients in the end generate no cost to Algorithm 1.
\end{proof}

Since even in the incremental model (when points can only arrive, not vanish) there is an %\linebreak 
$\Omega(\log \nqueries/ \log \log \nqueries)$-lower bound (cf. \cite{fotakis2008}), Algorithm 1 is optimal up to a constant factor.

%===============================================================

\section{Capacitated online facility location (with insertions only)}
\label{sec:FL-cap-with-ins}

Our result in Theorem \ref{thm:unpacatitated-fl} shows that for uncapacitated \fl with deletions, one can extend the approach from earlier works (see \cite{Meyerson:2001} and also \cite{fotakis2008}) to design online algorithms that achieve asymptotically optimal competitive ratio. However, the model of \emph{capacitated} \fl with deletions is significantly more complex. Still, in the most basic case, the model \emph{with insertions only}, we observe that Algorithm M can be extended to the model of capacitated \fl. We run Algorithm M with a single modification: the nearest facility $y$ now is the nearest facility that is not fully saturated, that is, that has still available capacity.

\begin{observation}
\label{obs:Meyerson-capacitated}
Algorithm M in the capacitated case is $O(\log \nqueries/\log\log \nqueries)$-competitive in the model with insertions only.
\end{observation}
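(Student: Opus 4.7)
The plan is to extend the analysis of Meyerson's algorithm (as used in the proof of Theorem~\ref{thm:unpacatitated-fl}) to the capacitated setting. Fix an optimal capacitated solution and, for each optimal facility $v$ with client set $S$ (necessarily $|S| \le \capac$), partition $S$ into distance layers $S_0,\ldots, S_h$ with $h = O(\log|S|/\log\log|S|) \le O(\log\nqueries/\log\log\nqueries)$ exactly as in the proof of Theorem~\ref{thm:unpacatitated-fl}, using $d^* = \frac{1}{|S|}\sum_{x\in S}\dist(x,v)$ as the base scale.

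For each layer $S_i$ I would decompose time into alternating \emph{first epochs} (during which no algorithm facility sitting at an $S_i$-point is non-saturated) and \emph{second epochs} (during which at least one such facility is non-saturated). The martingale estimate of Lemma~\ref{lem:martingale} is distribution-free: its proof uses only that the $d'$-values are valid conditional probabilities and is agnostic to how they are computed, so it applies verbatim and yields expected cost $O(1)$ per first epoch. During each second epoch the nearest non-saturated facility lies within $S_i$ and hence $d' \le 2d^*(\log|S|)^{i/2}$, so the expected second-epoch cost in $S_i$ is at most $O(\sqrt{\log|S|})$ times the corresponding optimum connection cost when $i \ge 1$, and $O(1)$ times it when $i = 0$, exactly as in the proof of Theorem~\ref{thm:unpacatitated-fl}.

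The main obstacle, absent from the uncapacitated case, is that a single open facility in $S_i$ can saturate after absorbing $\capac$ clients (possibly drawn from the client sets of other opt facilities), triggering yet another first epoch in $S_i$. The hard part is to control the total number of first epochs summed over all opt facilities and all layers. I would write $N_{v,i} \le 1 + M_{v,i}$, where $N_{v,i}$ is the number of first epochs for $S_i$ and $M_{v,i}$ is the number of algorithm facilities at $S_i$-points that ever saturate, so that $\sum_{v,i} N_{v,i} \le (h+1)k + \Sigma$, with $k$ the number of opt facilities and $\Sigma = \sum_{v,i} M_{v,i}$ the total number of algorithm-facility saturations across the whole execution. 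Each saturation event fills a distinct algorithm facility with $\capac$ clients, so $\Sigma \le \nqueries/\capac$; since the opt solution must itself open at least $\lceil \nqueries/\capac \rceil$ facilities (each serving at most $\capac$ clients), both $k$ and $\Sigma$ are upper bounded by the opt opening cost. Therefore the total expected first-epoch cost is $O(h) \cdot \OPT$, and combining with the layerwise second-epoch bound yields expected algorithm cost $O(\log\nqueries/\log\log\nqueries) \cdot \OPT$, matching the lower bound of~\cite{fotakis2008} up to constants.
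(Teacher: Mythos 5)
Your proposal is correct and follows essentially the same route as the paper's proof: the same layer decomposition, the same reclassification of costs into interlacing first/second epochs according to whether an unsaturated facility exists in $S_i$, the same application of Lemma~\ref{lem:martingale} per first epoch, and the same charging of the extra first epochs to saturation events, of which there are at most $\nqueries/\capac \le \OPT$. Your accounting $N_{v,i} \le 1 + M_{v,i}$ is in fact slightly more explicit than the paper's sketch, but the argument is identical in substance.
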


\begin{proof}
We only sketch the proof, since it is a straightforward modification of the original proof of Meyerson \cite{Meyerson:2001} and Fotakis \cite{fotakis2008}. Similarly as described in the proof of Theorem~\ref{thm:unpacatitated-fl}, we partition the set $S$ of clients connected to some open facility $v$ into subsets $S_0, \ldots, S_h$, depending on their distance to~$v$. What is different from the analysis of Meyerson and Fotakis is the way in which we split time into epochs for each layer $S_i$, as here the first and second epoch can possibly interlace. Formally, the cost incurred by a client belongs to the first epoch, if at the moment when the client appears there is no unsaturated open facility in the layer $S_i$. If there is an unsaturated open facility in the layer $S_i$, then the cost incurred by the client is classified to the second epoch.

The total cost of clients of the first epoch is bounded by Lemma~\ref{lem:martingale}, and is at most $\ell + \nqueries/\capac$, where $\ell$ is the total number of sets $S_i$, as at most $\nqueries / \capac$ facilities may be saturated during the course of the algorithm. Note that $\ell$ is exactly $(h+1) = O(\log \nqueries/\log\log \nqueries)$ times greater than the number of open facilities in the optimal solution. Also, as each facility can serve only $\capac$ clients, the term $\nqueries / \capac$ is not greater than the cost of the optimal solution.

The cost of clients of the second epoch is bounded as in the proof of Theorem~\ref{thm:unpacatitated-fl}, i.e., Algorithm M pays at most $d^* (\log |S|)^{i/2}$ for each client from $S_i$, whereas optimal solution pays at least $d^* (\log |S|)^{(i-1)/2}$, assuming $i > 0$. The cost of clients from $S_0$ is bounded analogously as in the proof of Theorem~\ref{thm:unpacatitated-fl}.
\end{proof}

%===============================================================

\section{Capacitated facility location with deletions}
\label{sec:FL-cap-with-del}

The result in Section \ref{sec:FL-cap-with-ins} may give a hope that also our result from Theorem \ref{thm:unpacatitated-fl} for uncapacitated \fl with deletions can be easily extended to the model of \emph{capacitated} \fl with deletions.
For example, let us think of a simple extension of Algorithm 1 to the capacitated case.
Perhaps the most natural idea is to let $d$ be the distance
to the closest open unsaturated facility, i.e., a facility
which still can serve additional clients.
Unfortunately this line of reasoning does not lead to a meaningful competitive ratio,
because in the case when all the clients arrive at the same location all
the distances are equal to zero and therefore such a modified version of Algorithm 1
would be deterministic.
Playing against a deterministic algorithm is very convenient for the adversary,
as the adversary might remove all the clients which were not turned
into open facilities by the algorithm, leading to $\Omega(\capac)$ competitive ratio.

One natural idea to introduce randomness to the algorithm is to increase the value of $d$, so that even if the distance to the closest facility is zero, the client might still decide to open a new facility. However, this idea alone does not seem to lead to any reasonable competitive factor. Below, we present a typical hard example for one possible implementation of this idea. (Note, that this competitive ratio is as bad as the one obtained by an algorithm that opens facilities in all input points.)

\begin{claim}
    Let $\mathcal{A}$ be Algorithm 1 modified, so that $d$ is the maximum of the distance to the closest unsaturated facility and $10/\capac$. Then, the competitive ratio of $\mathcal{A}$ is $\Omega(\capac)$.
\end{claim}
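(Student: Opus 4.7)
The plan is to exhibit a one-point adversarial instance that forces $\mathcal{A}$ to repeatedly open (and then lose) unnecessary facilities while the offline optimum stays at $1$. Let $\metric=\{p\}$ be a single-point metric. The input sequence consists of a \emph{seed} phase, in which we insert $\capac$ clients at $p$, followed by a long \emph{churn} phase in which we $T:=\capac^2$ times insert a new client at $p$ and immediately delete it.

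The first step is to analyse $\mathcal{A}$ on this sequence. During the seed phase the first client opens a facility at $p$ with probability $1$, and each of the remaining $\capac-1$ seed clients opens a new facility with probability $10/\capac$. Hence $\mathcal{A}$ ends the seed phase with $\Theta(1)$ open facilities at $p$ in expectation, whose combined capacity is $\Omega(\capac)$; since the number of active clients never exceeds $\capac+1$ during the remainder of the run, these seed facilities stay unsaturated throughout. Consequently every churn insertion sees distance $0$ to the closest unsaturated facility, the modified rule is invoked with $d=10/\capac$, and the inserted client opens a fresh facility with probability exactly $10/\capac$. Whenever such a fresh facility is opened, the only point assigned to it is the inserting client itself, so its immediate deletion closes the facility without triggering any reassignment of other clients and hence without generating any additional cost. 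By linearity of expectation the churn phase alone charges $\mathcal{A}$ an expected opening cost of $T\cdot 10/\capac = 10\,\capac$.

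The second step is the comparison with the offline optimum. Only the $\capac$ seed clients remain at $p$ at the end of the input, so a single facility opened at $p$ serves all of them at zero connection cost, giving $\OPT=1$. Dividing the two estimates yields a competitive ratio of $\Omega(\capac)$, as claimed.

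The only delicate point --- and thus the main thing to get right --- is that each churn insertion must genuinely cost $\mathcal{A}$ an expected $10/\capac$, with no cancellation from the subsequent deletion. The construction ensures this through two invariants: the seed facilities remain unsaturated (so $d$ really does drop to $10/\capac$ rather than being inflated by a larger distance to an unsaturated facility), and each churn-opened facility hosts only its own creator (so its deletion is cost-free, never triggering the reprocessing branch of Algorithm~1 on any other client). With those two invariants in place, the lower bound is immediate.
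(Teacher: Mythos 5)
Your construction fails because of the cost model the paper uses: the algorithm is charged only for its \emph{final} configuration --- the facilities still open and the connection costs of the clients still active at the end of the sequence --- not for every facility it ever opened. This is stated explicitly in the proof of Theorem \ref{thm:unpacatitated-fl} (``we completely ignore the cost generated by the clients that are removed during the course of the algorithm, as those clients in the end generate no cost''), and it is precisely why both lower-bound constructions in the paper arrange for the wasted facilities to sit at points that are \emph{never deleted}. In your churn phase, every facility opened by a churn client is hosted at that client and vanishes the moment the client is deleted, so the $10\,\capac$ expected opening cost you compute never appears in the algorithm's final cost. At the end of your sequence $\mathcal{A}$ retains only the facilities opened by the surviving seed clients, which is $1+(\capac-1)\cdot 10/\capac=O(1)$ in expectation against $\OPT=1$, so your instance certifies only an $O(1)$ ratio. (A secondary inaccuracy: with constant probability exactly one seed facility opens, and it is then \emph{exactly} saturated by the $\capac$ seed clients, so your invariant that the seed facilities remain unsaturated throughout is not guaranteed; this only causes further openings at churn clients, which again do not survive.)

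The idea you are missing is to use the capacity constraint to force the algorithm to keep reopening facilities at a location whose clients \emph{do} survive. The paper's proof does this on a star: in each of $\capac$ rounds a client is inserted at the center and then a batch of $10\capac$ leaf clients saturates whatever facility is open at the center; the leaf clients are deleted at the end, but the $\Omega(\capac)$ facilities accumulated at the center are hosted by never-deleted clients and persist, while the optimum serves the $\capac$ surviving center clients with a single facility of cost $1$.
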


\begin{proof}
    Consider a star metric with the center $\mathfrak{o}$ and the remaining $10 \capac^2$ points at distance $\frac12$ from $\mathfrak{o}$. Let us analyze the performance of $\mathcal{A}$ on the following input sequence. First, we repeat $\capac$ times the following insertion: insert a point at $\mathfrak{o}$ and then $10 \capac$ points in different leaves of the metric, and then, at the end, remove all points located in the leaves.

    The optimal solution will have one open facility at one of the $\capac$ points located at $\mathfrak{o}$, and so the optimal cost is 1. We claim that the expected size of the solution found by $\mathcal{A}$ is $\Omega(\capac)$.

    To see this, consider a single round of insertions. If at the beginning of the round, there are no unsaturated facilities at $\mathfrak{o}$, then one is created with probability $\Theta(1)$. If that happens, then each of the clients inserted at the leaves opens a facility with probability $\frac12+10/\capac$ until the facility is saturated. Since there are $10\capac$ such clients, w.h.p.~the facility gets saturated. It also follows, that for any round, w.h.p.\ there are no unsaturated facilities at $\mathfrak{o}$ at the beginning of the round.

    The expected number of facilities opened at $\mathfrak{o}$ is the sum over all rounds of insertions of the probabilities that a facility is open at the beginning of the round. Based on the observations of the previous paragraph, this probability is $\Theta(1)$ for each round, and the claim follows.
\end{proof}

%===============================================================

\subsection{Hierarchically well-separated trees and facility location}
\label{subsec:HSTs}

%ARTUR: One reviewer wrote: "It would be nice to have a formal definition of HST before it is used in the text." but I do NOT think this is necessary.
In the following Sections \ref{subsec:HSTs} -- \ref{subsec:Algorithm2-analysis}, we will present an online algorithm for the \emph{capacitated} \fl problem in the fully dynamic setting with low competitive ratio, with both insertions and deletions. We begin with a brief overview. We will assume that (unlike in the rest of the paper) \emph{the input metric is given in advance}, $\metric$ is the set of all points in the metric space, and $\nmetric$ is the number of points in the metric space, $\nmetric = |\metric|$. We use the embedding of the original metric space into hierarchically well-separated trees (cf. \cite{Bartal:1996,Bartal:1998,FRT:2004}), on which we run our \fl algorithm. Once we run the algorithm on a hierarchically well-separated tree $\hst$, for every open facility, we will evenly split its capacity into $\height = O(\log\capac)$ parts and then allocate each partial capacity solely to the points in one of $\height$ areas we will define later. Then we use a key property which ensures that every point $v$ that would use an open facility $u$ in the offline uncapacitated optimal solution, to find a replacement facility with still available capacity that is at the same distance from $v$ in $\hst$ as the distance from $u$ to $v$ in $\hst$.

Our approach relies on the concept of \emph{hierarchically well-separated trees (HSTs)}, which are metric spaces defined on the leaves of weighted rooted trees (cf. \cite{Bartal:1996,Bartal:1998,FRT:2004}). It is known (see \cite{FRT:2004}) that for every metric space, there is an HST with stretch $O(\log \nmetric)$. Let $\hst$ be such an HST for the metric given in our instance. Let the level of an internal node in the tree $\hst$ be the number of edges on the path to the root. Let $\Delta$ denote the diameter of the resulting metric space. In our paper, we will assume, without loss of generality, that the diameter of the original metric space is $\Delta = 1$. (Indeed, if a pair of points is at distance larger than 1, then in the \fl problem we will never allocate one of them to another, since it is always cheaper to pay 1 to open a new facility; therefore, we can treat any distance larger than 1 as equal to 1.)
We also modify short distances in the metric, that is,
for any pair of points in the metric we assume their distance is at least $1/\capac$.
Note that as $\nfin / \capac$ is a lower bound on the cost of the optimum solution
such modification of the metric increases the cost of the optimum solution at most
by a constant factor.
Then using the framework of HSTs, we will assume that the metric in the instance of capacitated online \fl we are solving is a shortest paths metric in a tree $\hst$ of depth $\height=\log \capac$, satisfying the following conditions:
\begin{itemize}
%\begin{compactitem}
\item any edge connecting vertices of depth $i$ and $i+1$ is of length $2^{-i}$,
\item the set of potential clients are leaves of $\hst$,
\item all leaves of $\hst$ are at the same depth $\height$.
%\end{compactitem}
\end{itemize}

%\noindent (The result of Fakcharoenphol et al.\ \cite{FRT:2004} implies that the reduction to HST has stretch factor $O(\log \nmetric)$.)

\begin{definition}
For two leaves $u$, $v$ of $\hst$, define $\distlog(u,v) = \lfloor - \log \dist_{\hst}(u,v) \rfloor$.
\end{definition}

\Artur{Reviewer wrote: \emph{``I think that $\distlog(u,v)$ is in fact the depth of the LCA minus 1. (Note that $\dist_{\hst}(u,v)$ is the length of the path in the tree from u to the LCA plus the length of the path in the tree from the LCA to $v$. In particular, this contains two edges of length $2^{-i}$, where $i$ is the depth of the LCA.)''}}Less formally, $\distlog(u,v)$ is the depth of the lowest common ancestor of $u$ and $v$ in $\hst$, which follows from the assumption that the weights of edges of $\hst$ are powers of two depending on their depth. From the triangle inequality we have the following property.
%\Artur{Do I see it wrong: why does it follow from the triangle inequality? It's just saying that $x \ge \min\{x,y\}$, which is trivially true for any $x, y \in \R$, so where is the triangle inequality?}

\begin{claim}
\label{claim:1}
For $u, v, w \in \metric$ we have $\distlog(u,w) \ge \min\{\distlog(u,v), \distlog(v,w)\}$.
\end{claim}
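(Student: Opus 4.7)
The plan is to establish the ultrametric property of $\hst$ and then translate it into a statement about $\distlog$ via monotonicity of $-\log$ and of $\lfloor \cdot \rfloor$. The plain metric triangle inequality by itself is too weak here: it only yields $\dist_{\hst}(u,w) \le 2 \max\{\dist_{\hst}(u,v), \dist_{\hst}(v,w)\}$, which after taking $-\log$ and applying the floor loses an additive constant and gives $\distlog(u,w) \ge \min\{\distlog(u,v), \distlog(v,w)\} - 1$. To obtain the clean bound in the claim, the argument must exploit the geometric weighting of $\hst$.

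First, I would record the key structural fact about $\hst$. Because every leaf lies at depth $\height$ and every edge between depths $i$ and $i+1$ has length $2^{-i}$, the tree distance between two leaves $x,y$ depends only on the depth $d_{xy}$ of their lowest common ancestor and equals $\dist_{\hst}(x,y) = 2 \sum_{i=d_{xy}}^{\height-1} 2^{-i}$. This expression is strictly decreasing in $d_{xy}$, so deeper LCAs correspond to smaller tree distances. Next I would invoke the standard tree-combinatorial observation that among the three pairwise LCAs of any three leaves $u,v,w$ at least two must coincide (and equal the highest of the three), which in particular implies $d_{uw} \ge \min\{d_{uv}, d_{vw}\}$. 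Combined with the monotonicity above, this gives the ultrametric inequality
\[
    \dist_{\hst}(u,w) \le \max\{\dist_{\hst}(u,v), \dist_{\hst}(v,w)\}.
\]

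Finally, taking $-\log$ of both sides reverses the inequality into
\[
    -\log \dist_{\hst}(u,w) \ge \min\{-\log \dist_{\hst}(u,v),\ -\log \dist_{\hst}(v,w)\},
\]
and applying $\lfloor \cdot \rfloor$, which is monotone and satisfies $\lfloor \min\{a,b\} \rfloor = \min\{\lfloor a \rfloor, \lfloor b \rfloor\}$, yields $\distlog(u,w) \ge \min\{\distlog(u,v), \distlog(v,w)\}$, as claimed. I do not expect any real obstacle; the only point that requires a moment of care is recognizing that the plain triangle inequality is too lossy after discretization by $\lfloor\cdot\rfloor$, so the argument truly depends on the ultrametric structure of $\hst$ rather than on $\hst$ merely being a metric space.
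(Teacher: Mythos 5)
Your proof is correct and matches the paper's (essentially unstated) justification: the paper simply remarks that $\distlog$ is determined by the depth of the lowest common ancestor and asserts the claim, while you fill in exactly that argument via the ultrametric inequality for $\hst$. Your observation that the plain triangle inequality would only give $\min\{\distlog(u,v),\distlog(v,w)\}-1$ after flooring is a valid and worthwhile point, since the paper's phrase ``from the triangle inequality'' is only accurate if read as the strong (ultrametric) triangle inequality of the HST.
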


%===============================================================

\subsection{Algorithm for fully dynamic capacitated facility location in HSTs}
\label{subsec:Algorithm2}

In this section, we present an algorithm for fully dynamic capacitated \fl in a hierarchically well-separated tree  $\hst$, Algorithm 2. We will analyze its performance in the next section.

We will assume the input metric space is the shortest path metric in a hierarchically well-separated tree $\hst$, with all input points coming from the leaves of $\hst$, as defined in the previous section. % Let $\capac \ge 2 \height$, where $\height$ is the depth of all leaves in $\hst$.
%\Artur{Why did someone remove ``Let $\capac \ge 2 \height$, where $\height$ is the depth of all leaves in $\hst$.''? What was wrong with that (especially the assumption that $\capac \ge 2 \height$)?}
In the algorithm below, when a new facility is opened at a point $v$, then its capacity $\capac$ is evenly split into $\height$ parts, denoted as functions $cap_i(v)$ for $0 \le i < \height$. We will design the algorithm so that the capacity $cap_i(v)$ will be used solely by clients $u$ such that $\distlog(u,v) = i$, that is, by the clients such that the lowest common ancestor of $u$ and $v$ in $\hst$ is of depth $i$. Apart from that constraint, the algorithm mimics Algorithm~1 from Section \ref{subsec:Algorithm1}.%\Artur{One reviewer noted that \emph{``I guess $cap_i(v)$ should be defined as $c/h$ rounded down. Otherwise the algorithm might produce infeasible solutions.''} \textcolor[rgb]{0.00,0.00,1.00}{I corrected it in the algorithm, but even if I think this will do, it's possible I missed some further implications in the analysis, and so I'd appreciate if somebody could double-check it}.}

\paragraph*{Algorithm 2:}
%
%\begin{compactitem}
\begin{itemize}
\item[] \textbf{insert} $v$:
        \begin{itemize}
        \item Call \textbf{connect}($v$).
        \end{itemize}
\item[] \textbf{delete} $v$:
        \begin{itemize}
        \item For all clients of $v$ (in arbitrary order) call \textbf{connect}($u$).
        \end{itemize}
\item[] \textbf{connect} $v$:
        \begin{itemize}
        \item If $\max_v$ is undefined, then set $\max_v = 0$.
        \item Let $u$ be the closest point to $v$ such that $cap_i(u) > 0$,
          where $i = \distlog(u,v)$.
        \item If such $u$ does not exist, then \textbf{open}($v$) and \textbf{exit}.
        \item $p = \min\{1, \dist_{\hst}(u,v) + \frac{12 \ \height \ln\nqueries }{\capac}\}$.
        \item If $p \le 2 \max_v$, then connect $v$ to $u$ and decrease $cap_i(u)$ by one.
        \item Otherwise: (i) set $\max_v = p$,
        (ii) with probability $p$ \textbf{open}($v$),
        and with probability $1-p$ connect $v$ to $u$ and decrease $cap_i(u)$ by one.
        \end{itemize}
\item[] \textbf{open} $v$:
        \begin{itemize}
        \item Open a facility at $v$.
        %TODO \height+1?
        \item For each $0 \le i < \height$ set $cap_i(v) = \lfloor \capac / \height\rfloor$.
        \end{itemize}
%\end{compactitem}
\end{itemize}

%===============================================================

\subsection{Key property of Algorithm 2}

Our analysis of the performance of Algorithm 2 begins with the following key lemma, which states that despite the capacity constraints, Algorithm 2 has the property that in every step of the algorithm, with high probability, every client will be able to connect to a nearest facility (in metric $\hst$).

\begin{lemma}
\label{lem:1}
Let $v \in \metric$ be a client that has never been deleted in the input sequence. If at some point $t_0$ of the course of Algorithm 2 a facility is opened at $v$, then with probability at least $1-\nqueries^{-3}$ from time point $t_0$ onwards, for any client $u$ there will always be an open facility $v' \in \metric$ (possibly $v'=v$), such that $\dist_{\hst}(u,v') \le \dist_{\hst}(u,v)$ and $cap_{\distlog(u,v')}(v') > 0$.
\end{lemma}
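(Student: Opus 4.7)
The plan is to fix a level $i\in\{0,1,\dots,\height-1\}$ and show that, with probability at least $1-\nqueries^{-12}$, $cap_i(v)>0$ at every time $t\ge t_0$; a union bound over the $\height$ levels then yields the claimed $1-\nqueries^{-3}$ bound with slack. This reduction suffices for the lemma: since $v$ is never deleted from the input, the facility opened at $v$ at time $t_0$ persists at every subsequent time, and for any client $u$ with $\distlog(u,v)=i$ the choice $v'=v$ satisfies both $\dist_{\hst}(u,v')=\dist_{\hst}(u,v)$ and $cap_{\distlog(u,v')}(v')=cap_i(v)>0$ whenever $cap_i(v)>0$.

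The first key observation I will use is that any single client $u$ with $\distlog(u,v)=i$ can trigger at most one decrement of $cap_i(v)$ throughout the execution after $t_0$: once $u$ is connected to $v$ it remains so until $u$ itself is removed from the input, because $v$ is never destroyed as a facility and $u$ is only reprocessed when its current facility is deleted. Thus, for $cap_i(v)$ ever to reach zero, at least $k:=\lfloor \capac/\height\rfloor$ distinct clients with $\distlog(\cdot,v)=i$ must each be connected to $v$ at level $i$. The second key observation is that any such contributing client $u$ must never open a facility during the execution; otherwise it would serve as its own facility and would never be reconnected to $v$. At the very first processing of $u$, $\max_u$ is undefined (effectively $0$), so the algorithm necessarily performs a coin flip whose opening probability is at least $p_{\min}:=12\,\height\,\ln\nqueries/\capac$, thanks to the additive slack term in the definition of $p$. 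Hence the probability that $u$ contributes is bounded by the probability that $u$'s first coin flip outputs "connect", which is at most $1-p_{\min}$.

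Since coin flips for distinct clients use fresh independent randomness, a principle-of-deferred-decisions argument couples the first flips of the contributing clients with $k$ independent Bernoulli trials, bounding the probability of having $k$ distinct contributors by $(1-p_{\min})^k\le e^{-p_{\min}k}=e^{-12\ln\nqueries}=\nqueries^{-12}$. The main technical subtlety I expect is the handling of skip-connect operations, at which $u$ is assigned to $v$ without any fresh coin flip at the moment of assignment; however, any skip-connect presupposes an earlier coin flip for $u$ (the one that set $\max_u$) whose outcome was "connect", and the skip can be charged to exactly that earlier flip. Since distinct contributing clients charge to distinct flips, the product $(1-p_{\min})^k$ still controls the joint event, and a union bound over the $\height$ levels completes the proof.
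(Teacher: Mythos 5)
Your proof attempts to establish something strictly stronger than the lemma: that $cap_i(v)$ itself never reaches zero for any level $i$. The lemma is deliberately phrased to allow $v'\neq v$ precisely because $v$'s own per-level capacity \emph{can} be exhausted with non-negligible probability, and the paper's proof is built around handling exactly that case. Concretely, many distinct clients can connect to $v$ at level $i$ over the course of the execution: whenever a client is processed and $v$ is its closest facility with spare capacity at the relevant level, it connects to $v$ with probability as large as $1-p_{\min}$, and a client that instead opens its own facility does not prevent \emph{later, independently processed} clients from again attempting $v$ (the newly opened facility only ``shields'' $v$ from clients for which it is strictly closer and has capacity at the right level). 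So if the adversary places enough clients at level-$i$ leaves relative to $v$ (possibly spread over several sibling subtrees, or interleaved with deletions), the number of clients that end up connected to $v$ at level $i$ behaves like a sum of many Bernoulli$(1-p)$ trials rather than like a single geometric run, and it exceeds $\lfloor\capac/\height\rfloor$ with probability close to $1$, not $\nqueries^{-12}$.

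This is also where the probabilistic step of your argument breaks down formally. From ``each contributing client's first flip came up connect, an event of conditional probability at most $1-p_{\min}$'' you cannot conclude that the probability of having $k$ distinct contributors is $(1-p_{\min})^k$: the set of contributors is random, and the event ``at least $k$ contributors exist'' is a union over many candidate $k$-subsets of clients, each of whose first flips are indeed independent but of which there may be far more than $k/(1-p_{\min})$. A valid sequential (stopping-time) version would require $\Pr[\exists\ (j{+}1)\text{-st decrement}\mid \mathcal{F}_{\tau_j}]\le 1-p_{\min}$, which fails because after the $j$-th decrement many different future clients each get an independent chance to cause the next one. The paper's proof avoids this entirely: for a fixed client $u$ and time $t_1$ it considers the set $S$ of \emph{all} active clients in the subtree below the LCA of $u$ and $v$, notes that only these can consume capacity at the relevant levels of facilities in $S$, and shows that either $|S|<\lfloor\capac/\height\rfloor$ (so $v$ itself still has capacity) or, w.h.p., enough points of $S$ have opened facilities that their aggregate capacity exceeds $|S|$, so \emph{some} facility $v'\in S$ within distance $\dist_{\hst}(u,v)$ of $u$ has spare capacity. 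That aggregate-capacity argument over $S$ is the missing idea; a per-facility saturation bound for $v$ alone cannot work.
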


\begin{proof}
In order to prove the lemma, by the union bound, it is enough to show that for a fixed client $u$ and a fixed point of time $t_1 \ge t_0$ the claim holds with probability at least $1 - \nqueries^{-5}$.

Let us observe that if Algorithm 2 opens a facility at a client $x$, then the decrease of the value $cap_i(x)$ can only be caused by a client $y$ such that $\distlog(x,y) = i$.

For a fixed $u$ and $t_1$, let $S$ be the set of all clients $s$ that:
%
%\begin{compactitem}
\begin{itemize}
\item were inserted before the time $t_1$,
\item have not been removed till time $t_1$, and
\item satisfy $\distlog(s,v) \ge \distlog(u,v)$ (i.e., $\dist_{\hst}(s,v) \le \dist_{\hst}(u,v)$).
\end{itemize}
%\end{compactitem}
%
Less formally, $S$ is the set of all \emph{active clients} at time $t_1$ located at the leaves of the subtree of $\hst$ rooted at the lowest common ancestor of $u$ and $v$.

Let us observe first that, by Claim \ref{claim:1}, for any $s \in S$ we have $\distlog(u,s) \ge \min\{\distlog(u,v),$ $\distlog(v,s)\} \ge \distlog(u,v)$,
and consequently,
\begin{align}
\label{eq:2}
    \dist_{\hst}(u,s) \le \dist_{\hst}(u,v)
    \enspace.
\end{align}
Therefore, any point from $s \in S$ that is an open facility with sufficient capacity at time $t_1$ is a suitable choice for $v'$ in the claim.

On the other hand, observe that no client $x$ outside of $S$ can be attached to a facility $s \in S$ and at the same time contribute to the decrease of $cap_k(s)$, for any $k \ge \distlog(u,v)$, as such a point would be a part of $S$, leading to a contradiction.

In view of these two observations, we only have to show that there is at least one open facility in $S$ with sufficient capacity at time $t_1$.

Let $\capac' = \lfloor \capac/h \rfloor$. Consider two cases. First, assume that $|S| < \capac'$. In that case there are not enough vertices to make $cap_{\distlog(u,v)}(v) = 0$, and hence the lemma follows for $v'=v$.

In the second case, assume that $|S| \ge \capac'$. Let us arbitrarily partition the set $S$ into groups $S_1, \dots, S_r$, such that each group is of size at least $\capac'/2$ and smaller than $\capac'$. Consider a single group $S_j$. Algorithm 2 ensures that any point of $S$ becomes an open facility with probability at least $\frac{12 \ln \nqueries}{\capac'}$. From the size of $S_j$ we infer that with probability at least
    $1 - \left(1 - \frac{12 \ln \nqueries}{\capac'}\right)^{|S_j|} \ge 1 - \left(1 - \frac{12 \ln \nqueries}{\capac'}\right)^{\capac'/2} \ge 1 - e^{-6 \ln \nqueries} = 1 - \nqueries^{-6}$
%
%at least $1 - \nqueries^{-6}$
%
there is an open facility in $S_j$. Therefore, by the union bound, with probability at least $1 - \nqueries^{-5}$ there is at least one open facility in each of the groups $S_1, \dots, S_r$.
%Note that in such a case any facility $s \in S$ when opened has capacity $cap_{\distlog(u,s)} = \capac'$.
%
Hence, assuming that there is at least one open facility in each of the groups $S_1, \dots, S_r$, the total capacity provided by open facilities in $S$ is at least $r \cdot \capac' > |S|$, % WAS $r \cdot \capac > |S|$ - PLEASE CHECK IT THE CHANGE IS CORRECT
%
%\begin{displaymath}
%    \sum_{\substack{s\in S\\ s\mathrm{\ is\ an\ open\ facility}}} cap_{\distlog(u,s)} > |S|
%    \enspace,
%\end{displaymath}
%
so there exists some open facility $v' \in S$, such that $cap_{\distlog(u,v')}(v') > 0$, where $\dist_{\hst}(u,v') \le \dist_{\hst}(u,v)$ by $(\ref{eq:2})$. This completes the proof of the lemma.
\end{proof}

%===============================================================

\subsection{Fully dynamic capacitated facility location}
\label{subsec:Algorithm2-analysis}

With Lemma \ref{lem:1} at hand, we are now ready to conclude our analysis and present an $O(\log \nmetric + \log \capac \log \nqueries)$-competitive online algorithm for capacitated \fl with deletions. The algorithm assumes that the underlying metric space is known in advance.
Recall, that we have modified the initial metric so that its diameter is $\Delta = 1$ and
the minimum inter-point distance is at least $1/\capac$.

\begin{theorem}
\label{thm:capacitated-fl}
There is an $O(\log \nmetric + \log \capac \log \nqueries)$-competitive online algorithm for capacitated \fl with deletions.
\end{theorem}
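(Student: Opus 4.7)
The plan is to combine the FRT HST embedding with a Meyerson-style per-layer analysis of Algorithm 2, using Lemma \ref{lem:1} as the capacity-aware replacement for ``an open facility exists nearby.''

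First, I would embed the (appropriately truncated) input metric into a random HST $\hst$ of depth $\height = \log\capac$ via the FRT construction, and run Algorithm 2 on $\hst$. Because FRT is non-contracting, the algorithm's actual cost in the original metric is at most its cost in $\hst$, so it suffices to bound the HST cost and then take expectation over FRT randomness. For the HST analysis I would use the original-metric optimum $\OPT$ (viewed as a solution on $\hst$) as the benchmark, partitioning each of its client groups $S_v$ into layers $S_{v,i} = \{u \in S_v : \distlog(u,v) = i\}$ for $i = 0, \ldots, \height$, and for each layer separating a first epoch (until some client in $S_{v,i}$ has a successful coin flip) from a second epoch afterwards. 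The key structural observation is that the first success $w_i$ lies in $S_{v,i} \subseteq S_v$, which consists of clients alive at the end and hence never deleted, so Lemma \ref{lem:1} can be invoked with $w_i$ as the anchor.

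In the first epoch, Lemma \ref{lem:martingale} applies verbatim to Algorithm 2's augmented coin-flip probabilities $p = \min\{1, \dist_{\hst}(\cdot,\cdot) + \alpha\}$ with $\alpha = 12 \height \ln\nqueries/\capac$, so the expected total cost per layer is $O(1)$, giving $O(\height)$ per OPT facility. In the second epoch, Lemma \ref{lem:1} (applied with its anchor set to $w_i$) ensures that every subsequent client $u \in S_{v,i}$ has access to an open facility with available capacity at HST-distance at most $\dist_{\hst}(u,w_i) \le 2\dist_{\hst}(u,v) = O(2^{-i})$, so the geometric-doubling argument from the proof of Theorem \ref{thm:unpacatitated-fl} bounds $u$'s second-epoch cost by $O(2^{-i} + \alpha)$. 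Summing, the $O(2^{-i})$ parts add up to $O(\OPT_{\hst}^{\mathrm{conn}})$ (the benchmark's HST-evaluated connection cost) and the $O(\alpha)$ parts sum to at most $\alpha \cdot \sum_v |S_v| \le \alpha \capac |\OPT^{\mathrm{open}}| = O(\height \log\nqueries) \cdot |\OPT^{\mathrm{open}}|$ via the capacity bound $|S_v| \le \capac$. With the first-epoch $O(\height \cdot |\OPT^{\mathrm{open}}|)$, the algorithm's HST cost is at most $O(\log\capac \log\nqueries) \cdot |\OPT^{\mathrm{open}}| + O(\OPT_{\hst}^{\mathrm{conn}})$.

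Taking expectation over FRT turns $\OPT_{\hst}^{\mathrm{conn}}$ into $O(\log\nmetric)\cdot\OPT^{\mathrm{conn}}$, yielding the claimed $O(\log\nmetric + \log\capac \log\nqueries)$ competitive ratio; the rare failure event of Lemma \ref{lem:1} (union-bounded probability $\nqueries^{-2}$ across all anchors) contributes only a negligible term handled by a crude polynomial worst-case bound. The main obstacle will be to keep the HST stretch $\log\nmetric$ and the opening factor $\log\capac \log\nqueries$ \emph{additive} rather than multiplicative; this is achieved by charging the $\alpha$-additive part of coin-flip probabilities against the number of OPT facilities (via $|S_v| \le \capac$) instead of against the connection cost, so that the stretch blow-up only touches the connection-cost term $\OPT^{\mathrm{conn}}$. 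A secondary subtlety is ensuring the anchor in Lemma \ref{lem:1} is never deleted, which is exactly why we anchor at $w_i \in S_v$.
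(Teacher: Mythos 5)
Your proposal is correct and follows essentially the same route as the paper's proof: FRT embedding into a depth-$\log\capac$ HST, per-OPT-facility layering by $\distlog$, a two-epoch split with Lemma \ref{lem:martingale} bounding the first epoch and Lemma \ref{lem:1} supplying the nearby unsaturated facility in the second, and charging the additive $12\height\ln\nqueries/\capac$ term against $\OPT \ge \nfin/\capac$ so that the $O(\log\nmetric)$ stretch only multiplies the connection cost. The details you make explicit (the anchor $w_i$ being a never-deleted client so Lemma \ref{lem:1} applies, and absorbing its failure event) are correct refinements of points the paper leaves implicit.
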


\begin{proof}
We apply Algorithm 2 to an HST $\hst$ of depth $\height = \log \capac$ and
expected $O(\log \nmetric)$ stretch.

In what follows, we analyze the competitive ratio of this algorithm.
Let $C$ be the set of clients, which are still present at the end of the sequence of queries,
i.e., that were inserted by the adversary and have not been removed.
Let us consider an optimal offline solution $\OPT$ (for the original metric, not
with respect to the HST $\hst$).
Let $F_{\OPT}$ be the subset of facilities opened in $\OPT$ and for $f \in F_{\OPT}$, let
$\clients(f)$ be the set of clients served by $f$.
Note that the cost of $\OPT$ equals $|F_{\OPT}| + \sum_{f \in F_{\OPT}} \sum_{x \in \clients(f)} \dist(x,f)$.

Fix some $f \in F_{\OPT}$.
Similarly as in the earlier proofs, we are going to partition the set $\clients(f)$
into subsets, however this time the partition is not driven by distances of the metric,
but by the tree $\hst$.
We partition the set $\clients(f)$ into $\height$ subsets $S_1, \dots, S_{\height}$
such that for $x \in S_i$ we have $\distlog(x,f) = i$ (note that some $S_i$ might be empty).

For each $S_i$ we split the course of the algorithm into two epochs.
The second epoch starts as soon as the first point in $S_i$ becomes an open facility.
Note that, however, the second epoch might never start.

By the same analysis as in the uncapacitated case, the expected contribution of the
first epoch over all the facilities $f \in F_{\OPT}$ and all sets $S_i$
is upper bounded by $O(\height \cdot |F_{\OPT}|) = O(\log \capac \cdot \OPT)$.

Observe that by Lemma~\ref{lem:1},
%\Artur{One reviewer writes: \emph{``According to Lemma~\ref{lem:1}, isn't this only true with high probability? You should argue why this is enough.'' I wrote something; I hope it makes sense and is sufficient.}}
with high probability, each client from $S_i$ (re)assigned in the second epoch has an available open unsaturated facility at distance at most $O(\dist_{\hst}(f, S_i))$. By the analysis from the proof of Theorem~\ref{thm:unpacatitated-fl}, expected cost incurred by a client in the second epoch for a fixed HST is upper bounded by $O(\dist_{\hst}(f, S_i)) + 12 \height \ln \nqueries / \capac$. In expectation (over HSTs) of the sum of the first terms over $\clients(f)$ is bounded by $O(\log \nmetric) \cdot \sum_{x \in \clients(f)} \dist(x,f)$. Overall the contribution of the first terms to the competitive ratio is $O(\log \nmetric)$.

As for the second terms of the form $12 \height \ln \nqueries / \capac$, their
sum over all the clients of $C$ is at most $\frac{12 \height \nfin  \ln \nqueries}{\capac}$,
but since $\OPT \ge \nfin / \capac$, the contribution of the second terms
to the competitive ratio is bounded by $O(\log \capac \cdot \log \nqueries)$.
\end{proof}

%===============================================================

\section{Conclusions}

In this paper we present the first thorough study of natural variants of the online \fl problem, when the clients and facilities are not only allowed to arrive to the system, but they can also depart from the system at any moment. In this fully-dynamic online problem, we study two fundamental settings: uncapacitated and capacitated \fl for uniform facility costs.
For uncapacitated \fl, we design an extension of the classical insertion-only randomized algorithm for the \fl problem due to Meyerson \cite{Meyerson:2001}, and show that it achieves an asymptotically optimal competitive ratio of $O(\log\nfin / \log\log\nfin)$ (Theorem \ref{thm:unpacatitated-fl}).
The capacitated \fl is more complex, and here we first show (Observation \ref{obs:Meyerson-capacitated}) that if no deletions are allowed, then one can achieve an asymptotically optimal competitive ratio of $O(\log\nqueries / \log \log\nqueries)$, the same bound as it is known for the uncapacitated variant.
When deletions are allowed, the task is more challenging, but we still are able to incorporate the framework of hierarchically well-separated trees to obtain an online $O(\log \nmetric + \log\capac \log\nqueries)$-competitive algorithm for the capacitated \fl problem with deletions (Theorem \ref{thm:capacitated-fl}).

Our work demonstrates that despite the fact that the online \fl problem with deletion is clearly more complex than the classical online problem with insertions only, the most natural variants of these problems, for both the uncapacitated and the capacitated model for uniform facility costs, have very efficient online algorithms that achieve competitive ratios matching or almost matching those of the insertion only variants of the problem. It is an interesting open problem whether one can improve the competitive ratio for the capacitated case with deletions, in particular whether it is possible to remove the dependence on $\nmetric$.

%===============================================================

\bibliographystyle{plainurl}
\bibliography{fl}

%===============================================================

%\newpage
\appendix
\begin{center}\huge\bf Appendix \end{center}

%===============================================================

\section{Requirements for the uniformity of the model}
\label{app:model}

As we have mentioned earlier in Section \ref{subsec:online-fl}, in this paper we consider the \fl problem with uniform opening costs and with uniform capacities of the facilities. We now argue, that with no assumptions on opening costs and capacities, no non-trivial competitive ratio is possible.

\paragraph*{Arbitrary costs.}
If the opening costs are arbitrary, then consider a simple two-client instance, where both clients are located at the same point, the first client has an opening cost of $M$, and the second one has a cost of $1$. The optimum solution has cost $1$, and an online algorithm has no way to avoid paying at least $M$, and $M$ can be arbitrarily large.

\begin{appendix-claim}
\label{claim:no-go-arbitrary-costs}
Any online algorithm for the \fl problem with arbitrary (e.g., non-uniform) opening costs has unbounded worst-case competitive ratio.
\end{appendix-claim}

\paragraph*{Arbitrary capacities.}
If the capacities are arbitrary, then consider an $M$-element sequence of clients, all located at the same point and with the same cost of $1$, where the first $M-1$ clients can only open a facility of capacity $1$, whereas the last client can open a facility of capacity $M$. The optimum offline solution only opens the facility for the last client and has a cost of $1$. An online algorithm is forced to open a facility for every client, for a total cost of $M$.

\begin{appendix-claim}
\label{claim:no-go-arbitrary-capacities}
Any online algorithm for the uncapacitated \fl problem with arbitrary (e.g., non-uniform) capacities has unbounded worst-case competitive ratio.
\end{appendix-claim}

%===============================================================

\end{document}